\DeclarePairedDelimiter{\ceil}{\lceil}{\rceil}
\newcommand{\dom}[1]{Dom\left(#1\right)}
\newcommand{\cl}{\mathnormal{g}}
\newcommand{\tcl}{\tilde{\cl}}
\newcommand{\bits}{bits}
\newcommand{\bitsf}[1]{\bits\left(#1\right)}
\newcommand{\clf}[1]{\cl\left(#1\right)}
\newcommand{\tclf}[1]{\tcl\left(#1\right)}
\newcommand{\R}{\mathbb{R}}
\newcommand{\N}{\mathbb{N}}
\newcommand{\Nnn}{\mathbb{Z}_{\geq 0}}
\newtheorem{problem}{Problem}[section]
\newtheorem{theorem}{Theorem}[section]
\begin{document}

\title{Optimal Symbolic Controllers Determinization for BDD storage.} 

\author{Ivan S. Zapreev, Cees Verdier, Manuel Mazo Jr.} 

\affil{Center for Systems and Control, Technical University of Delft, The Netherlands (e-mails: I.Zapreev@tudelft.nl, M.Mazo@tudelft.nl, C.F.Verdier@tudelft.nl)}

\maketitle

\begin{center}
\thanks{Supported by STW-EW as a part of the CADUSY project \#13852.}

\thanks{The short version of this article has been accepted to ADHS'2018.}
\end{center}

\begin{abstract}
Controller synthesis techniques based on symbolic abstractions appeal by producing correct-by-design controllers, under intricate behavioural constraints. Yet, being relations between abstract states and inputs, such controllers are immense in size, which makes them futile for embedded platforms. Control-synthesis tools such as PESSOA, SCOTS, and CoSyMA tackle the problem by storing controllers as binary decision diagrams (BDDs). However, due to redundantly keeping multiple inputs per-state, the resulting controllers are still too large. In this work, we first show that choosing an optimal controller determinization is an NP-complete problem. Further, we consider the previously known controller determinization technique and discuss its weaknesses. We suggest several new approaches to the problem, based on greedy algorithms, symbolic regression, and (muli-terminal) BDDs. Finally, we empirically compare the techniques and show that some of the new algorithms can produce up to $\approx 85$\% smaller controllers than those obtained with the previous technique.
\end{abstract}


\section{Introduction}
	Controller synthesis techniques based on symbolic models, such as e.g.~\cite{Tabuada_2009, Rungger_EtAl_2013, Liu_EtAl_2013}, are becoming increasingly popular. One of the key advantages of these techniques is that they allow for synthesising correct-by-construction controllers of general nonlinear systems under intricate behavioural requirements. However, the downside of the synthesised controllers is their size as, in essence, they are huge tables mapping abstract state-space elements into input-signal values. Even for toy examples, the produced controllers can reach a size of several megabytes. In real-life applications however, they can be several orders of magnitude larger. The latter prohibits them from being used on embedded micro-controllers which typically have very limited memory resources. In general, this state-space explosion is the consequence of:
(1) the number of abstract system states and inputs which are exponential in the number of dimensions and inverse-polynomial in the discretisation values; and (2) storing multiple valid input signals per abstract state.

There are numerous tools, implementing or incorporating control synthesis, such as PESSOA, SCOTS, CoSyMA, LTLMoP, TuLiP, see \cite{Mazo_EtAl_2010}, \cite{Rungger_EtAl_2016}, \cite{Mouelhi_EtAl_2013},  \cite{Finucane_EtAl_2010}, and \cite{Wongpiromsarn_EtAl_2011} correspondingly. Internally, they either use an explicit control law representation in a table form or employ Reduced Ordered Binary Decision Diagrams, introduced by \cite{Bryant_1986} and called RO-BDDs or simply BDDs, in an attempt to optimise the memory needed to store the synthesised control law. RO-BDDs are canonical, efficiently manipulable, and in many cases allow for compact data representation. However, their size is strongly dependent on the variables' ordering and the problem of finding an optimal one is known to be NP-complete, as shown by \cite{Bollig_1996}. To fight that issue, tools such as SCOTS and Pessoa use the state of the art RO-BDD library CUDD, see \cite{Somenzi_2015}, which implements numerous efficient variable ordering optimisation heuristics. Yet, even when using BDDs, controllers synthesised for practical applications can easily reach hundreds of megabytes.

To our knowledge, there have been just a few attempts made to find compact but practical representations of (symbolically produced) control laws. Except for using BDDs, we are only aware of another two approaches. The first one, suggested by \cite{Staudt_1998}, uses piece-wise linear functions, also known as linear in segments (LIS) functions, to approximate control functions of the form $\cl : \R \rightarrow \R$. The approximation is considered for scalar control functions of one argument only. The main motivation for LIS is to reduce the memory footprint of implementing controllers at the cost of some on-line computations, which nonetheless are fast to perform.
However, this approach does not directly scale to multiple dimensions or allows to resolve multiple-input's non-determinism.  

Another technique to reduce the control-law size, we shall refer to as \texttt{LA} (Local Algorithm), was proposed by \cite{Girard_2012}. It borrows ideas from algebraic decision diagrams (ADDs), see \cite{Bahar_EtAl_1993}, for compact function representation and exploits the non-determinism inherent to safety controllers. The considered controllers are multi-valued maps $\cl : \R^{n} \rightrightarrows \N$. The suggested approach attempts to optimise the controller size determinizing the control law by choosing one of the possible control signals for each of the state-space points. In the selection of such unique inputs, \texttt{LA} maximizes the size of state-space neighbourhoods employing the same input with the expected outcome of minimizing an ADD representation of the resulting control function.
However, the minimality of the ADD representation cannot be guaranteed in general by this approach, which leads us to investigate if better compression approaches may be viable.

In this paper, we first prove that the problem of choosing a size-optimal controller determinization is NP-complete. We do that assuming the BDD controller representation, but the result can be easily generalised. Next, we suggest two new determinization approaches : \texttt{GA} (Global Algorithm) - based on a greedy algorithm for the minimum set-cover selection problem, see \cite{Karp_1972, Chvatal_1979}; \texttt{SR} - a hybrid of ADD-based and symbolic regression techniques, powered by genetic programming, see \cite{Koza1994, Willis_EtAl_1997}. \texttt{GA} attempts to minimise the BDD size by maximising the number of controller states having the same input signal. It differs from \texttt{LA} in that, when choosing a common input for a set of states, it looks at the state-space globally, without considering the actual state positions. \texttt{SR} (Symbolic Regression) aims at bridging the intrinsic limitations of \texttt{LA} and \texttt{GA} by using ``arbitrary" (polynomial and sigmoid in our case) functions as controller representations. This way we realise the Kolmogorov's~\cite{Li_EtAl_2008} view on data compression\footnote{Instead of storing the control law as an explicit map, we search for a symbolic function that for a given state computes the input value.}. Further, we combine \texttt{LA} and \texttt{GA} into a hybrid approach called \texttt{LGA} (Local-Global Algorithm). The idea here is that the determinization is done as in \texttt{LA} but, if multiple common inputs are possible, the preference is given to the one suggested by \texttt{GA}. In addition, we consider \texttt{B}-prefixed version of \texttt{LGA} (\texttt{BLGA}) which attempts for a better compression by using BDD variable reordering to produce abstract state indexes.

We perform an empirical evaluation on a number of examples from the literature. 
Our results show that compression-wise\footnote{Up to the found optimal BDD variable reordering.} there is no absolute best approach. However, \texttt{LGA} seems, on most cases, to be providing the best compression. The \texttt{SR} approach, while only providing better compressions in few examples, may be most promising when looking at actual embedded deployments, if it could be pushed to remove any use of BDDs, and their overhead on actual implementations.

\section{Preliminaries} \label{sec:prelim}

\subsection{Minimum set cover} \label{sub:sec:prelim:min_set_cov}
The minimum set cover problem (\texttt{MSC}) is formulated as: 
\begin{problem}[\texttt{MSC}]
Given a set $X$ and a cover $\left\{S_{j}\right\}_{j \in I}$, i.e. $X\subseteq\bigcup_{j \in I}S_j$, where $\vert X \vert, \vert I \vert < \infty$, find the smallest subcover $I^* \subseteq I: X \subseteq \bigcup_{j \in I^*} S_{j}$.
\end{problem}
Both, the decision and selection versions of (\texttt{MSC}, are known to be NP-complete. The first approximate poly-nomial-time solution for \texttt{MSC} was given by \cite{Karp_1972}. Later, \cite{Chvatal_1979} suggested an approximate poly-nomial-time solution for the generalized {\em ``minimum set weight cover problem''} (\texttt{MWSC}); which extends \texttt{MSC} by that each set $S_{k}$ is assigned a weight $s_{k} \geq 0$ and the question is to find the smallest sub-cover with the minimum total weight.  According to \cite{Cormen_EtAl_2001}, the Chv\'{a}tal's algorithm time complexity is: $O\left(\vert I \vert \cdot \vert X \vert \cdot min\left(\vert I \vert, \vert X \vert\right)\right)$.

\subsection{Symbolic regression} \label{sub:sec:prelim:symb_reg}
	Symbolic regression is a type of regression analysis that searches for analytical expressions best fitting a given dataset of numerical data, both in terms of accuracy and simplicity. 
	We apply this technique in order to find the smallest analytical expressions best fitting symbolic-model-based control-law functions, ensuring for the smallest control law representation.
	One of the most popular means for symbolic regression is genetic programming, see~\cite{Koza_EtAl_1992} (GP). In this work, similar to~\cite{Whigham_EtAl_1995}, we employ grammar guided genetic programming algorithms (GGGP) to find multi-dimensional analytical expressions fitting the controller's data. In fact, the genetic process follows~\cite{Verdier_EtAl_2017} except for that the real-value parameter tuning is done with CMA-ES~\cite{Hansen_EtAl_2001}. To speed up the CMA-ES procedure, we use sep-CMA-ES which has a linear time and space complexity \cite{Ros_2008}.

\subsection{Binary Decision Diagrams} \label{sub:sec:prelim:bdds}
	\emph{Binary Decision Diagrams} (BDDs), represented with rooted directed acyclic graphs were introduced by \cite{Bryant_1986}, as a compact representation for boolean functions $F : \left\{0,1\right\}^{n}\rightarrow \left\{0,1\right\}$. Given $F$ with a list of arguments $\left\{v_{i}\right\}^{n}_{i=0}$, also called BDD variables or just variables, the BDD of $F$ results from the Shannon expansion thereof. The order of arguments in the signature of $F$ has clearly no impact on $F$ itself, but it has a drastic impact on the size of the resulting BDD. Finding a size-optimal BDD variable ordering was shown, in \cite{Bollig_1996}, to be NP-complete. Yet, there are multiple polynomial heuristics, \cite{Scholl_EtAl_1999}, that can find a semi-optimal variable ordering. One of the most popular thereof is sifting, \cite{Rudell_1993}, and its variants. Given a fixed variable order, each BDD has a canonical minimum-size representation, called Reduced Ordered BDD (RO-BDD).
Assuming the bottom-up BDD traversal, an RO-BDD can be obtained by the following poynomial-time algorithm, for more details see Section~$4.2$ of~\cite{Bryant_1986}:
	\begin{enumerate}
		\item Combine terminal nodes with equal values\;
		\item Eliminate nodes with equivalent\footnote{``Equivalent'' means: Representing the same binary function.} children\;
		\item Combine nodes with pairwise equivalent children\;
	\end{enumerate}
	
\emph{Multi Terminal BDDs} (MTBDDs) extend BDDs in that tree's terminal nodes allow for arbitrary labels, thus useful to encode functions of the form $F : \left\{0,1\right\}^{n}\rightarrow U$, with $|U|<\infty$. The BDD reduction algorithm is naturally extendable towards MTBDD which thus have the canonical RO-MTBDD form. For an (MT)BDD $M$, we define $R\left(.\right)$ as a reduction function producing the RO-(MT)BDD $R\left(M\right)$.  Algebraic Decision Diagrams (ADDs), introduced by \cite{Bahar_EtAl_1993}, are a synonym of MTBDDs. The current state of the art implementation for RO-(MT)BDDs is provided by the CUDD package~\cite{Somenzi_2015}. 

\subsection{SCOTS v2.0} \label{sub:sec:prelim:scots}
	SCOTS is an open source software that implements construction of symbolic models, also known as discrete abstractions, of possibly perturbed, nonlinear control systems. The tool natively supports invariance and reachability specifications as well as several control synthesis algorithms. The control laws can be stored in BDD. 
SCOTS comes in a form of a header-only C++ library that can be easily included in any C/C++ code but also has a MATLAB interface. We base our algorithms on the interfaces provided by the UniformGrid and SymbolicSet classes of the tool.
	
\section{Problem statement} \label{sec:problem}
Consider a (possibly non-linear) discrete time control system of the form:
$$
x(k+1)=f(x(k),u(k)),\;\, x(k)\in\mathcal{X}\subseteq\R^n, u(k)\in\mathcal{U}\subseteq\R^m.
$$
Symbolic approaches, see e.g.~\cite{Tabuada_2009}, automatically synthesize controllers in the form of discrete state transition systems. Furthermore, the resulting controllers can often be reduced to a look-up table, see~\cite{Reissig_EtAl_2016}, prescribing for each point of the state-space a set of applicable inputs guaranteeing that the control specification is satisfied. Such synthesized controllers usually take the form of the combination of a (finite) set-valued map $\cl : \mathcal{S} \rightrightarrows \mathcal{V}$, and quantization maps $\texttt{q}_x:\mathcal{X}\to\mathcal{S}$, $\texttt{q}_u:\mathcal{U}\to\mathcal{V}$ reducing the originally infinite state and input sets to finite sets (usually defining a grid), i.e. $\mathcal{S}\subset\mathcal{X}$, $\mathcal{S}\subset\mathcal{X}$, $|\mathcal{S}|<\infty$, $|\mathcal{V}|<\infty$. Moreover, the usual approach is to quantize each dimension of $\mathcal{X}$ and $\mathcal{U}$ independently, i.e. $\texttt{q}_x(x)=(\texttt{q}^1_x(x_1),\ldots,\texttt{q}^n_x(x_n))$, $\texttt{q}_u(u)=(\texttt{q}^1_u(u_1),\ldots,\texttt{q}^n_u(u_n))$, where each of the $\texttt{q}^i_{x}:\mathcal{X}_i\subset \R \to \mathcal{S}_i$, such that $\mathcal{S}=\mathcal{S}_1\times\ldots\times\mathcal{S}_n$, and similarly for the input quantizer. This results in controller implementations selecting at each time step $u(k)\in\clf{\texttt{q}_x(x(k))}$, see for details of such controllers~\cite{Reissig_EtAl_2016}. Most often, the controllers synthesized do not provide a valid input for some subset $\mathcal{S}_\emptyset\subset\mathcal{S}$. We define the set $\mathcal{S}_c:=\mathcal{S}\setminus\mathcal{S}_\emptyset$. We may assume that there is some element $\texttt{nc}\in\mathcal{V}$ denoting a ``no-input", and thus we can define $\mathcal{S}_\emptyset:=\cl^{-1}(\texttt{nc})$.   

A symbolic controller $\cl \subseteq \mathcal{S} \times \mathcal{V}$, by indexing the countable sets $\mathcal{S}_i$ and $\mathcal{V}_i$, can alternatively be interpreted as a relation $\cl \subset \Nnn \times \Nnn$. Consider $\mathcal{B} := \left\{0,1\right\}$, and let us define a fixed-length base-$2$ bit encoding for non-negative integers $\bits:\mathcal{K} \to \mathcal{B}^{b}$ for some $\mathcal{K} \subset  \Nnn$, $\vert\mathcal{K}\vert < \infty$, and $b := \ceil*{log_{2}\left(max\left(\mathcal{K}\right)\right)}$. For  $k = \left(k_{1}, k_{2}\right) \in \cl \subset \Nnn \times \Nnn$, mapping the bit vector $\left(\bitsf{k_{1}}, \bitsf{k_{2}}\right)$ to a boolean $1$  defines a BDD encoding of $\cl$. Similarly, one can construct an MTBDD encoding of $\cl$ by mapping $\bitsf{k_{1}}$ to $k_{2}$.

Relating elements of $\mathcal{S}$ or $\mathcal{V}$ with $\Nnn$ can be done with an indexing function, typically defined as:
	\begin{align}
		f_{b}\left(k_{a}, \ldots, k_{b}\right) := \sum^{b}_{i=a}k_{i}\cdot\left(\prod^{i-1}_{j=a}2^{\vert\bitsf{N_{j}}\vert}\right) \label{eq:cell_enc_cudd} \mbox{, or} \\
		f_{s}\left(k_{a}, \ldots, k_{b}\right) := \sum^{b}_{i=a}k_{i}\cdot\left(\prod^{i-1}_{j=a}N_{j}\right) \label{eq:cell_enc_scots}
	\end{align}
Here, $N_{j}$$:=$$\vert\mathcal{S}_j\vert$ for $j$$\in$$\overline{1,n}$, and $N_{j}$$:=$$\vert\mathcal{V}_j\vert$ for $j$$\in$$\overline{n\!+\!1,n\!+\!m}$; $\vert\bitsf{N_{j}}\vert$ is the data-type size needed to enumerate intervals in $j$. Equations~\ref{eq:cell_enc_scots}~and~\ref{eq:cell_enc_cudd} are both used in SCOTSv2.0. The former is employed in its interface classes (\texttt{UniformGrid} and \texttt{SymbolicSet}), as it delivers smaller indexes. The latter is used for BDD encoding as it avoids bit sharing between distinct dimension interval indices.
	
	In the present we consider the following minimisation problem aimed at finding the smallest controller determinization of a given controller $\cl$:
\begin{problem}[\texttt{OD}]
	Find the best determinization $\cl^*$ of a controller $\cl$ optimizing: $\cl^*=argmin_{\tcl  \in \mathcal{F}}\left\vert enc\left(\tcl\right)\right\vert$, where 
	\vspace{-0.2cm}
	\begin{eqnarray*}
		\mathcal{F}:&=&\left\{ \tcl : \Nnn \rightarrow \Nnn \vert \right. \\
		&& \left.\forall s \in \dom{\cl}: \left(\left(\tclf{s} \in \clf{s}\right) \wedge \left(\vert \tclf{s} \vert = 1\right)\right)\right\},
	\end{eqnarray*} 
	$enc\left(.\right)$ encodes controllers into RO-(MT)BDDs, and $\vert.\vert$ provides the (MT)BDD size. 
\end{problem}

In theoretical derivations, as in \cite{Kwiatkowska_2006}, we define $\vert.\vert$ to be the number of (MT)BDD nodes. In practice, $\vert.\vert$ is the number of bits used to store the (MT)BDD by the CUDD package in the best-found, variable reordering.

\section{\texttt{LA} on MTBDDs} \label{sec:la_weak}
	\cite{Girard_2012} suggests a controller-size minimisation technique, which we call \texttt{LA}, that uses ideas from MTBDDs to represent the controller function in the form of a binary tree. The approach does dimension-wise binary splitting of the controller's state-space bounding box. The areas with no-inputs are considered to allow for any input. For the areas with common inputs possible a single input is selected non-deterministically. A branch in the tree represents a state-space area with all states having common inputs (stored in terminal nodes). The determinization aims at choosing single inputs in a way minimising the depth of the tree branches. The latter is equivalent to reductions as in steps (1) and (2) of the RO-BDD construction (c.f. Section~\ref{sub:sec:prelim:bdds}), but not (3). 
	\cite{Girard_2012} showed that \texttt{LA} can lead to drastic size reductions, e.g., for ``the simple thermal model of a two-room building'' example the original controller required $1.000.000$ data units, whereas in the tree format it went down to $27$. Yet, in its original form this approach: \emph{(i)} does not preserve the controller's domain -- neglecting basic data of safe initial states; \emph{(ii)} employs a fixed state-space splitting algorithm -- not using controller's structural features;  \emph{(iii)} uses simple binary trees which are less efficient than MTBDDs, due to the latter compression abilities by variable reordering and their canonical reduced form. This motivates extending the approach towards MTBDDs.
	
	\texttt{LA} can be adapted to quantised state-spaces, since: 
	\begin{enumerate}
	\item[\emph{(i)}] For dimension $i$$\in$$\overline{1,n}$ and $s_{i}$$\in$$\mathcal{S}_{i}$, the bit sequence $\bitsf{s_{i}}$, defines a binary-tree path to $s_{i}$ in $\mathcal{S}_{i}$. 
	\item[\emph{(ii)}] For $s$$\in$$\mathcal{S}$, the alternating bit sequence obtained from $\bitsf{s_{1}},\ldots,\bitsf{s_{n}}$ defines a binary-tree path to $s$ in $\mathcal{S}$. 
	\end{enumerate}
	The latter, using bounded-length bit sequences as in Section~\ref{sec:problem}, allows to encode the \texttt{LA}'s binary tree as an MTBDD. The size reductions obtained for the original \texttt{LA} are then a subset of those we get using MTBDDs\footnote{Even with the original variable ordering.}, as we can: \emph{(i)} obtain RO-MTBDDs, utilising all the reduction steps \emph{(ii)} find a more efficient variable ordering. Let us now show that, however good, \texttt{LA} does not allow to utilise the full power of the MTBDD reductions due to its pure spacial orientation.
	
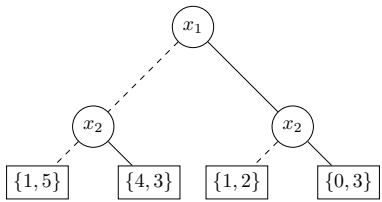
\begin{figure}
\centering
\scalebox{0.75}{
\begin{tikzpicture}[>=stealth, node distance=2.5cm]
	\node (A) [style={circle, draw}] {$x_{1}$};
	\node (B) [below left of=A, style={circle, draw}] {$x_{2}$};
	\node (C) [below right of=A, style={circle, draw}] {$x_{2}$};
	\node (D) [below left of=B, node distance=1.4cm, style={rectangle, draw}] {$\left\{1, 5\right\}$};
	\node (E) [below right of=B, node distance=1.4cm, style={rectangle, draw}] {$\left\{4, 3\right\}$};
	\node (F) [below left of=C, node distance=1.4cm, style={rectangle, draw}] {$\left\{1, 2\right\}$};
	\node (G) [below right of=C, node distance=1.4cm, style={rectangle, draw}] {$\left\{0, 3\right\}$};
	
	\draw[dashed](A) to node {}(B);
	\draw[](A) to node {}(C);
	\draw[dashed](B) to node {}(D);
	\draw[](B) to node {}(E);
	\draw[dashed](C) to node {}(F);
	\draw[](C) to node {}(G);
\end{tikzpicture}
}
\caption{An example MTBDD}\label{fig:mtbdd_ex_01}
\end{figure}

	Consider an MTBDD encoding of some \texttt{LA}'s binary tree, in its original variable ordering, see Figure~\ref{fig:mtbdd_ex_01}. \texttt{LA} traverses an MTBDD trying to find common inputs, stored in terminal nodes, for all of its sub-trees. A sub-tree with a common input can then be trivially reduced to a single terminal node. In this case however, there are no non-trivial sub-trees with common inputs, so \texttt{LA} has to \emph{non-deterministically} choose one (arbitrary) input value per terminal node. This results in $16$ possible determinization variants, most of which would not be reducible, see e.g. Figure~\ref{fig:mtbdd_ex_01_bad}, but a few would allow for reductions; the best one is in Figure~\ref{fig:mtbdd_ex_01_good}.

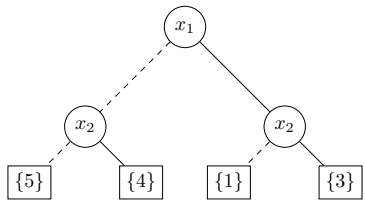
\begin{figure}
\centering
\scalebox{0.75}{
\begin{tikzpicture}[>=stealth, node distance=2.5cm]
	\node (A) [style={circle, draw}] {$x_{1}$};
	\node (B) [below left of=A, style={circle, draw}] {$x_{2}$};
	\node (C) [below right of=A, style={circle, draw}] {$x_{2}$};
	\node (D) [below left of=B, node distance=1.4cm, style={rectangle, draw}] {$\left\{5\right\}$};
	\node (E) [below right of=B, node distance=1.4cm, style={rectangle, draw}] {$\left\{4\right\}$};
	\node (F) [below left of=C, node distance=1.4cm, style={rectangle, draw}] {$\left\{1\right\}$};
	\node (G) [below right of=C, node distance=1.4cm, style={rectangle, draw}] {$\left\{3\right\}$};
	
	\draw[dashed](A) to node {}(B);
	\draw[](A) to node {}(C);
	\draw[dashed](B) to node {}(D);
	\draw[](B) to node {}(E);
	\draw[dashed](C) to node {}(F);
	\draw[](C) to node {}(G);
\end{tikzpicture}
}
\caption{A non-reducible determinization}\label{fig:mtbdd_ex_01_bad}
\end{figure}

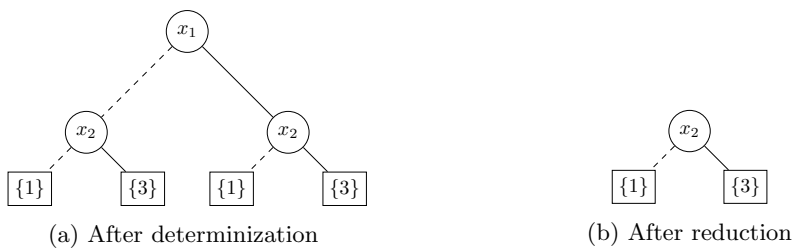
\begin{figure}
	\begin{subfigure}{.4\textwidth}
		\centering
		\scalebox{0.759}{
		\begin{tikzpicture}[>=stealth, node distance=2.5cm]
			\node (A) [style={circle, draw}] {$x_{1}$};
			\node (B) [below left of=A, style={circle, draw}] {$x_{2}$};
			\node (C) [below right of=A, style={circle, draw}] {$x_{2}$};
			\node (D) [below left of=B, node distance=1.4cm, style={rectangle, draw}] {$\left\{1\right\}$};
			\node (E) [below right of=B, node distance=1.4cm, style={rectangle, draw}] {$\left\{3\right\}$};
			\node (F) [below left of=C, node distance=1.4cm, style={rectangle, draw}] {$\left\{1\right\}$};
			\node (G) [below right of=C, node distance=1.4cm, style={rectangle, draw}] {$\left\{3\right\}$};
	
			\draw[dashed](A) to node {}(B);
			\draw[](A) to node {}(C);
			\draw[dashed](B) to node {}(D);
			\draw[](B) to node {}(E);
			\draw[dashed](C) to node {}(F);
			\draw[](C) to node {}(G);
		\end{tikzpicture}
		}
		\caption{After determinization}\label{fig:mtbdd_ex_01_good_a}
	\end{subfigure}
	\hfill
	\begin{subfigure}{.49\textwidth}
		\centering
		\vspace{1.3cm}
		\scalebox{0.75}{
		\begin{tikzpicture}[>=stealth, node distance=2.5cm]
			\node (B) [style={circle, draw}] {$x_{2}$};
			\node (D) [below left of=B, node distance=1.4cm, style={rectangle, draw}] {$\left\{1\right\}$};
			\node (E) [below right of=B, node distance=1.4cm, style={rectangle, draw}] {$\left\{3\right\}$};
	
			\draw[dashed](B) to node {}(D);
			\draw[](B) to node {}(E);
		\end{tikzpicture}
		}
		\caption{After reduction}\label{fig:mtbdd_ex_01_good_b}
	\end{subfigure}
	\caption{A reducible determinization}\label{fig:mtbdd_ex_01_good}
\end{figure}

In this paper, we suggest alternatives and hybrid approaches to overcome this potential shortcoming of \texttt{LA}, see Section~\ref{sec:det_algs}. Furthermore, to preserve information on safe initial states, we shall consider a modification of \texttt{LA} which forbids  assignment of ``any input'' to ``no-input'' grid cells.

\section{NP-completeness of determinization} \label{sec:np_com}
	\begin{theorem}
		The \texttt{OD} problem 
		is NP-complete (NP-C).
	\end{theorem}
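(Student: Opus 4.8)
The plan is to treat the optimisation statement through its decision version --- given $\cl$ and a bound $K\in\N$, decide whether some $\tcl\in\mathcal{F}$ has $\vert enc(\tcl)\vert\le K$ --- which is polynomially equivalent to \texttt{OD}, and to prove separately (i) membership in NP and (ii) NP-hardness by a reduction from \texttt{MSC} (Section~\ref{sub:sec:prelim:min_set_cov}).

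For membership in NP I would use the determinization itself as the certificate: one input index per state of $\dom{\cl}$ (plus a variable order, if the size measure is taken over reorderings), which has size polynomial in that of the instance $\cl$. Verification is polynomial: checking $\tclf{s}\in\clf{s}$ for every $s\in\dom{\cl}$ is a single scan; building the (MT)BDD of the functional relation $\{(s,\tclf{s})\}$ and reducing it to RO-(MT)BDD form is polynomial by the algorithm recalled in Section~\ref{sub:sec:prelim:bdds}; and counting its nodes and comparing with $K$ is immediate. Hence \texttt{OD}$\in$NP.

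For NP-hardness, given an \texttt{MSC} instance $X=\{x_1,\dots,x_n\}$, $\{S_j\}_{j\in I}$ (w.l.o.g.\ $\bigcup_j S_j=X$), I would build a controller whose states encode $X$ (after padding $n$ to a power of two by duplicating elements, which leaves the minimum subcover unchanged), whose inputs encode $I$, with $\clf{x_i}:=\{\,j\in I: x_i\in S_j\,\}$ and $\dom{\cl}=\mathcal{S}$. The guiding observation is that a determinization $\tcl\in\mathcal{F}$ picks for each $x_i$ an index $v_i$ with $x_i\in S_{v_i}$, so the set $J(\tcl)=\{v_i\}_i$ of used inputs is a subcover, and conversely each subcover induces a determinization using only its inputs; hence the least number of distinct inputs achievable over $\mathcal{F}$ equals the minimum subcover size. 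To turn ``few distinct inputs'' into ``small (MT)BDD'' I would pad the instance: (a) blow the input dimension up to $2^{L}$ values for $L$ polynomial in the instance size and place the $\vert I\vert$ real codewords far apart (a large-distance code), so that the ``$v=c$'' sub-BDDs of distinct used inputs pairwise disagree near their roots and their union contributes a node count confined to a narrow band that is strictly increasing in the number of used inputs; (b) tag the states so that the state-reading part of the relation's BDD is a fixed-size tree independent of $\tcl$ (distinct tags prevent sub-BDDs rooted at different states from merging). With $L$ taken large enough, yet still polynomial, the node-count band for $k$ used inputs sits strictly below the band for $k{+}1$, yielding a polynomial-time-computable threshold $K(k)$ with: $X$ has a subcover of size $\le k$ iff $\min_{\tcl\in\mathcal{F}}\vert enc(\tcl)\vert\le K(k)$. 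This is a polynomial reduction, so \texttt{OD} is NP-hard, and with membership in NP it is NP-complete; the same construction handles the MTBDD measure, where the number of terminal nodes equals the number of used inputs and the bookkeeping is lighter.

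The step I expect to be the real obstacle is precisely this padding argument: decoupling $\vert enc(\tcl)\vert$ from the layout-dependent --- and itself compressible --- internal structure of the (MT)BDD and pinning it, up to a controlled additive constant, to the number of distinct inputs used, while keeping all padding polynomial so that the separation $K(k)<K(k{+}1)$ survives. Everything else (NP membership, the \texttt{MSC} correspondence, the decision/optimisation equivalence) is routine; if one settles for the MTBDD measure and a looser size accounting, the construction shortens considerably.
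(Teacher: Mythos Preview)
Your high-level plan coincides with the paper's: NP membership via a guessed determinization that is reduced and size-checked in polynomial time, and NP-hardness via a reduction from \texttt{MSC} in which a determinization of the constructed controller is exactly a choice of a covering set for each $x_i$, so that ``few distinct inputs used'' should translate into ``small reduced (MT)BDD''.

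Where the paper is sharper is in the device that makes this translation exact. Instead of your padding scheme (large-distance input codes plus state tags), the paper works directly with the MTBDD encoding and appends \emph{one extra decision variable at the bottom}: for state $i$, the high child of that last variable is a terminal with the unique label $K+i$, and the low child is the terminal carrying the input set $\{\,j:\ x_i\in S_j\,\}$ to be determinized. The $N$ pairwise-distinct high terminals force every internal node to survive reduction (no two subtrees rooted at the last variable can coincide), so the non-terminal part is a fixed full tree of size $3N-1$ and the only variability is the number of distinct low-terminal labels. Hence $\vert R(M_I)\vert = 3N-1+\vert I\vert$, and minimising the reduced MTBDD size is \emph{literally} minimising the subcover size. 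No asymptotic separation, no codes, no band argument.

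This also closes the gap in your MTBDD shortcut. Saying ``the number of terminal nodes equals the number of used inputs'' is true but insufficient: without a blocker, internal nodes of the MTBDD $s\mapsto\tclf{s}$ can also merge, and how many do depends on the \emph{spatial pattern} of the assignment, not just on how many distinct values it takes; so the minimum size over $\mathcal{F}$ is not a function of the minimum number of distinct inputs alone. Your idea (b), ``tag the states so sub-BDDs cannot merge'', is exactly the right instinct, but you have not said how to realise it in the MTBDD encoding; the paper's one-extra-variable-with-unique-high-terminal is precisely that realisation, and once you have it, your whole step (a) becomes unnecessary for the theorem as stated.
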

	\begin{proof}
		To show that \texttt{OD} is NP-complete we prove that:
		
		\emph{(i)} \texttt{OD} is NP: Consider a non-deterministic algorithm\footnote{I.e. it has a non-deterministic step which always makes a ``proper'' (relative to the algorithm's goal) guess, see e.g.~\cite{Basu_2008}.}, Algorithm~\ref{alg:np_com:selection:od}, solving the selection \texttt{OD}. To prove that \texttt{OD} is NP, one must show that Algorithm~\ref{alg:np_com:selection:od} has polynomial time complexity and contains a polynomial number of random guesses, each independent on the problem size.  Clearly, all of the Algorithm~\ref{alg:np_com:selection:od} steps have polynomial time complexity. First of all, iterating over $Q \in 0, \ldots, \vert M \vert$,  checking for $\vert M'' \vert == Q$, and reducing $R\left(M'\right)$, see Section~\ref{sub:sec:prelim:bdds}, are polynomial time. Further, let us show that ``guessing $M'$ from $M$'' can also be done in polynomial time with a polynomial number of Bernoulli trials. 
		First, to visit $M$'s terminal nodes requires $\mathcal{O}\left(\vert M \vert\right)$ steps. Second, for a node, having at most $\vert \mathcal{U} \vert$ inputs, to randomly choose one input requires $\mathcal{O}\left( \ceil*{log_{2}\left(\vert \mathcal{U} \vert\right)}\right)$ Bernoulli trials. So we conclude that\footnote{All the missed auxiliary operations, e.g.: counting node inputs, removing the non-chosen inputs, and etc. are also polynomial time.} \texttt{OD} is NP.
		\begin{algorithm2e}
			\KwIn{$M$ - Controller's MTBDD}
			\KwResult{Size-optimal determinization of $M$}
			\For{$Q \in 0, \ldots, \vert M \vert$}{
				Guess $M'$ - a determinization of $M$\\
				$M'' \gets R\left(M'\right)$;\\
				\If{$\vert M'' \vert == Q$}
				{
					\Return{$M''$}
				}
			}
			\caption{NP algorithm for selection \texttt{OD}}
			\label{alg:np_com:selection:od}
		\end{algorithm2e}
		
		 \emph{(ii)} \texttt{MSC} is polynomial-time/space convertible to \texttt{OD}:
		 For an \texttt{MSC} with $X := \left\{ x_{i}\right\}^{N}_{i=1}$, $S := \left\{S_{j}\right\}^{K}_{j = 1}$, $\forall j \in \overline{1,K} : S_{j} \subseteq X$, and $N,K < \infty$, consider the next  three proving steps:
		 
		 \emph{a) Encode \texttt{MSC} as an MTBDD $M$}:
		 Take a binary tree with $N$ terminal nodes, indexed by $i \in \overline{1,N}$.
		 For each terminal node $i$ add a low (left) $t^{l}_{i}$ and a hight (right) $t^{h}_{i}$ children, such that $val\left(t^{l}_{i}\right) := \left\{ j \in \overline{1, K} \vert x_{i} \in S_{j} \right\}$ and $val\left(t^{h}_{i}\right) := \left\{ K + i \right\}$. Here, $val\left(.\right)$ is terminals' labelling function; the low terminals encode the \texttt{MSC} sets; and the high terminals prevent all but low-terminals' reductions. The resulting binary tree $T$ is a polynomial-space encoding of \texttt{MSC} as\footnote{Remember that we have $2\times N$ terminal nodes.}: $\vert T \vert  = 4\times N - 1$. Also, this is a polynomial-time encoding, as is realisable by Algorithm~\ref{alg:np_com:msc:od}, of time complexity $\mathcal{O}\left(N\right)$. To convert this binary tree into an MTBDD $M$, we shall interpret its non-terminal nodes as decision nodes, and its terminal nodes $t$ as value nodes, labeled with $val\left(t\right)$. This trivial conversion can also be done in polynomial time and space.

		 \emph{b) Solving \texttt{OD} for $M$ solves \texttt{MSC}}: 
		 For the given $M$ encoding of \texttt{MSC}, $R\left(M'\right)$, in Algorithm~\ref{alg:np_com:selection:od}, can only re-combine low terminals of $M'$ as high-terminal and thus non-terminal node reductions are prevented by the distinct high terminal node idexes. The high and non-terminals will stay intact and Algorithm~\ref{alg:np_com:selection:od} will effectively minimise the number of low terminals. The set $I$ of low-terminal labels of $M''$ then yields the solution for \texttt{MSC} as: \emph{(a)} $I$ defines a sub-cover of $S$; \emph{(b)} $\vert I \vert$ is minimal.
		 The former is clear as each $x_{i}$ of $X$ is related to a low-terminal. The latter can be proved by contradiction. First, fix low-terminal node values of $M$ to those of $I$ to get an MTBDD $M_{I}$ for which $\vert R\left(M_{I}\right) \vert = 3\times N - 1 + \vert I \vert$. Let us assume that $I$ is not a solution of \texttt{MSC} then there exists a sub-cover $I'$, such that $\vert I' \vert < \vert I \vert$. Similarly, for $M_{I'}$ we have $\vert R\left(M_{I'}\right) \vert = 3\times N - 1 + \vert I' \vert$, and thus $R\left(M_{I'}\right) < R\left(M_{I}\right)$. The latter contradicts the fact that Algorithm~\ref{alg:np_com:selection:od} solves \texttt{OD}.
		 
		 \emph{c) Decoding \texttt{MSC} solution from \texttt{OD} solution}:
		 Decoding of the \texttt{MSC} solution from $M''$ is straightforward: one needs to go through all of the low-terminal nodes and collect their labels. This requires linear time and space algorithm.
		 
		 Finally, since \texttt{MSC} is NP-C, \emph{(i)}\&\emph{(ii)}, imply that \texttt{OD} is NP-C.

		\begin{algorithm2e}
			\KwIn{$X = \left\{ x_{i}\right\}^{N}$ - Set elements}
			\KwIn{$S = \left\{S_{j}\right\}^{K}_{j = 1}$ - Set cover}
			\KwResult{The MTBDD encoding for \texttt{MSC}}
			$\mathrm{Node}\; root \gets \mathrm{null}$\\
			\If{$N > 0$}{
				$\mathrm{Queue\langle Node \rangle}\;terms\left(N\right) \gets \mathrm{empty}$\\
				$\mathrm{uint}\; cnt \gets \mathrm{1}$\\
			        $root \gets \mathbf{new}\; \mathrm{Node\left(\right)}$\\
			        $terms.push\_back\left(root\right)$\\
				\While{$cnt\; \not = N$}{
					$\mathrm{Node}\; term \gets terms.pop\_front\left(\right)$\\
					$terms.push\_back\left(term.low \gets \mathbf{new}\; \mathrm{Node\left(\right)}\right)$\\
					$terms.push\_back\left(term.high \gets \mathbf{new}\; \mathrm{Node\left(\right)}\right)$\\
					$tcnt \gets cnt + 1$
				}
				$\mathrm{uint}\; idx \gets \mathrm{1}$\\
				\While{$terms \not = \mathrm{empty}$}{
					$\mathrm{Node}\; term \gets terms.pop\_front\left(\right)$\\
					$term.low \gets \mathbf{new}\; \mathrm{Node\left(\right.}get\_set\_ids(x_{idx}, S\mathrm{\left.)\right)}$\\
					$term.high \gets \mathbf{new}\; \mathrm{Node\left(\right.}\left\{K + idx\right\}\mathrm{\left.\right)}$\\
					$idx \gets idx + 1$
				}
			}
			\Return{$root$};
			\caption{Encode \texttt{MSC} as \texttt{OD}}
			\label{alg:np_com:msc:od}
		\end{algorithm2e}
	\end{proof}

\section{Determinization algorithms} \label{sec:det_algs}
The newly suggested determinization algorithms have various underlying ideas: \texttt{GA} tries to maximise the number of states with the same input, and minimise the number of different inputs as a whole, both in an attempt to maximise the chances for (MT)BDD reductions; \texttt{LGA} combines complementary ideas of \texttt{LA} and \texttt{GA} to reduce the number of non-deterministic choices to be taken in the former one; \texttt{SR} attempts to find an analytical expression fitting the controller points on the largest part of its domain to reduce the number of distinct control mode areas to be stored;

	\subsection{Global Approach} \label{sub:sec:det_algs:ga}
		The \texttt{GA} approach is summarised in Algorithm~\ref{alg:det_algs:ga}, where: 
		\begin{enumerate}
		\item[\emph{(i)}] $inputs\_to\_states\left(.\right)$ creates $C$ -- the set of domain state indexes, $I$ -- the set of input indexes, and $\left\{C_{j}\right\}_{j \in I}$ -- the sets of states for the given inputs; 
		\item[\emph{(ii)}] $get\_min\_set\_cover(.)$ implements the \texttt{MSC} solution algorithm for unit set weights\footnote{The function returns a vector of inputs, ordered in the same way they were added to $I^{*}$, with the more common inputs coming first.}, see Section~\ref{sub:sec:prelim:min_set_cov}; 
		\item[\emph{(iii)}] $determinize\left(.\right)$ iterates over $I^{*}$ and for each state with the input removes all other inputs. 
		\end{enumerate}
		
		\begin{algorithm2e}
			\KwIn{$M$ - the controller's MTBDD}
			\KwResult{The determinized MTBDD}
			$\left(C, I, \left\{C_{j}\right\}_{j \in I}\right) \gets inputs\_to\_states(M)$\\
			$\mathrm{Vector}\; I^{*} \gets get\_min\_set\_cover\left(C, I, \left\{C_{j}\right\}_{j \in I}\right)$\\
			$M' \gets determinize\left(M, I^{*}\right)$\\
			\Return{$R\left(M'\right)$};
			\caption{The \texttt{GA} approach}
			\label{alg:det_algs:ga}
		\end{algorithm2e}

		\texttt{GA} differs from \texttt{LA} by looking at the state-space globally regardless of its' elements location. It maximizes the number of terminal nodes with identical labels, generally leading to a reduction in the number of used labels, which should facilitate MTBDD reductions.
	
	\subsection{Local-Global Approach} \label{sub:sec:det_algs:lga}
		Recall the MTBDD-based \texttt{LA} algorithm discussed in Section~\ref{sec:la_weak}. We showed that such determinization procedure can suffer from sub-optimal non-deterministic resolutions when multiple input choices are available in some regions. \texttt{LGA} combines \texttt{LA} with \texttt{GA} in an attempt to improve the resulting reductions by minimising this uncertainty. In essence, the \texttt{LGA} approach proceeds as \texttt{LA} up to the moment a non-trivial set of inputs, common for a grid area, is found; then the input is chosen according to the priority-descending order of inputs, as pre-computed by the $get\_min\_set\_cover\left(.\right)$ function, see Algorithm~\ref{alg:det_algs:ga}.

	\subsection{BDD-index based Local-Global Approach} \label{sub:sec:det_algs:blga}
		RO-(MT)BDDs achieve significant size reductions only if a ``good'' variable ordering is found, see Section~\ref{sub:sec:prelim:bdds}. Given the (MT)BDD encoding, see Equation~\ref{eq:cell_enc_cudd} of Section~\ref{sec:problem}, the variable reordering swaps grid-cell index bits realising a limited\footnote{Swapping bits affects all indexes; bits can not change value.} form of cell re-indexing.  Figure~\ref{fig:var_reo_ex} shows the effects thereof on the $\cl \subset \Nnn \times \Nnn$ function for an \texttt{LGA}-determinized BDD controller of the DC motor case study, see Section~\ref{sub:sec:empirical:case_studies}. The horizontal and vertical axes of the plots correspond to the state- and input-space element indexes respectively. The distinct vertical lines on Figures~\ref{fig:var_reo_ex:scots}~and~\ref{fig:var_reo_ex:bdd} are the $1.000$ point marks. According to Section~\ref{sec:problem}, the BDD's range of $\mathcal{S}$ indexes is wider than that of SCOTSv2.0.
		Comparing $\clf{.}$ in RO-BDD and SCOTSv2.0 indexes, the former exhibits better data clustering. To use this to our benefit, we suggest a version of \texttt{LGA}, called \texttt{BLGA}, using the RO-BDD indexes.

\begin{figure}
	\begin{subfigure}{.48\textwidth}
		\centering
		  \includegraphics[width=5.5cm, height=2cm]{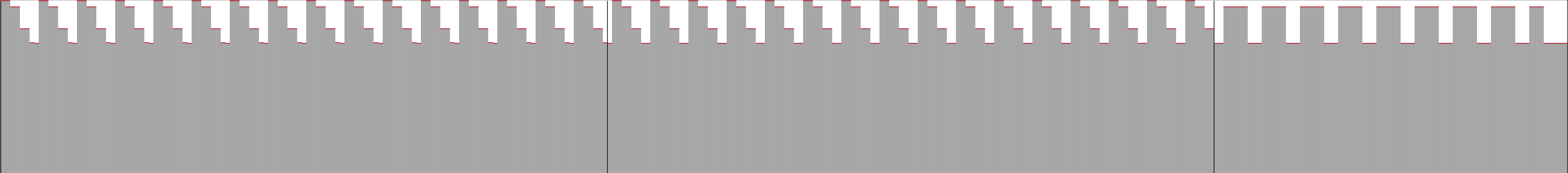}
		\caption{SCOTSv2.0 grid-cell index encoding}\label{fig:var_reo_ex:scots}
	\end{subfigure}
	\hfill
	\begin{subfigure}{.48\textwidth}
		\centering
		  \includegraphics[width=5.5cm, height=2cm]{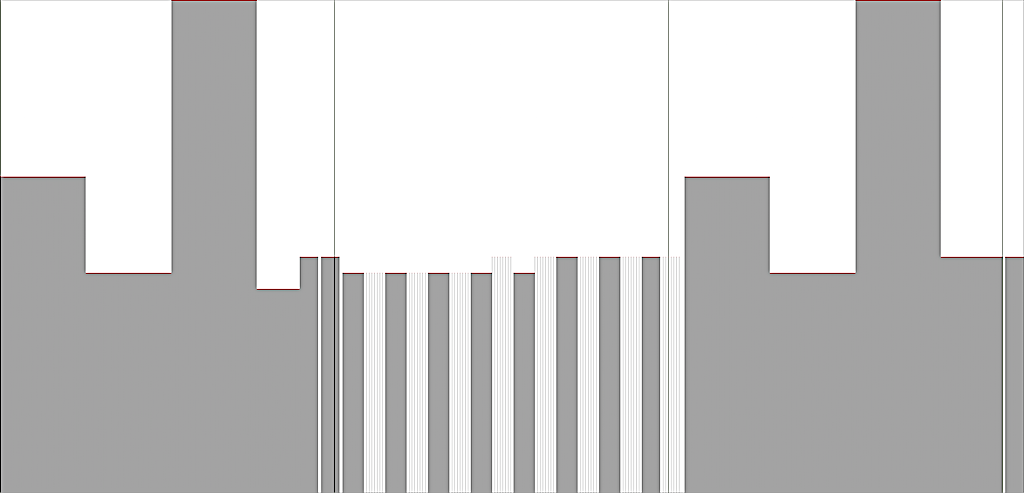}
		\caption{RO-BDD grid-cell index encoding}\label{fig:var_reo_ex:bdd}
	\end{subfigure}
	\caption{DC motor: controller function}\label{fig:var_reo_ex}
\end{figure}

	\subsection{Symbolic Regression} \label{sub:sec:det_algs:sr}
For the \texttt{SR} algorithm, a set of candidate controllers is evolved using a combination of GGGP and sep-CMA-ES, \emph{c.f.} references in Section~\ref{sub:sec:prelim:symb_reg}, using $i_{\max}$ individuals (i.e. candidate solutions) for $N$ generations. GGGP is used to evolve the functional structure of the controller based on a grammar and sep-CMA-ES to optimize the parameters. 
Given a candidate controller $\cl_{\texttt{SR}}: \R^n\rightarrow \R^m$, the fitness function $F$ with respect to a finite set $S$ 
is defined as: 
\begin{align*}
F(\cl_{\texttt{SR}},\mathcal{S}_c) =& \frac{\left| \left\{  s \in \mathcal{S}_c \mid  \texttt{q}_u(\cl_{\texttt{SR}}(s)) \in g(s) \right\}\right|}{|\mathcal{S}_c|}. \\
\end{align*}
In order to reduce the computation time, the set of states $\mathcal{S}_c$ is down-sampled to a set with a maximum of $\lambda$ elements. The reproduction involves selecting individuals based on tournament selection and the genetic operators crossover and mutation, in which parts of the individuals are exchanged or randomly altered respectively. More in depth descriptions of the used GGGP and sep-CMA-ES algorithms can be found in \cite{Verdier_EtAl_2017} and \cite{Ros_2008} respectively. After a maximum amount of generations the individual with the highest fitness is selected. For the resulting controller, it is verified for which states $s\in \mathcal{S}_c $ it holds that $\texttt{q}_u(\cl_{\texttt{SR}}(s)) \in g(s)$. For the remaining states the inputs are determinized using \texttt{GA}, \texttt{LA} or \texttt{LGA}. Finally, all states and corresponding new input indexes are again stored in a BDD. 

\newcommand{\rmm}[1]{\left< \mathrm{#1} \right>}

For our experiments we used the parameters in Table~\ref{tab:par}. Table~\ref{tab:PRexamp} shows the grammar employed to determine in which space analytical expressions to fit the controllers were selected. Here $\mathrm{sgn}$ denotes the signum function and Random Real $\in \left[-1,1\right]$ creates a random real within the specified interval.  The used starting symbol is $\rmm{strt}$. 

\begin{table}[ht]
\centering
\caption{\texttt{SR} parameters}
\label{tab:par}
    \begin{tabular}{rcl}
    parameters & value & explanation \\ \hline
 $\lambda$ & 1000 & Cardinality of down-sampled $\mathcal{S}_c$ \\
 $N$ & 50 &Max. number of GGGP generations. \\
 $M$ & 32 & Maximum number of individuals.\\
$d$ & 7 & Maximum tree depth of genotypes.\\
$(c_c, c_m)$ & $(0.5, 0.5)$ & Crossover and mutation chance. \\
$\sigma_0$ & 1& Initial standard deviation. \\
 $N_\mathrm{ES}$ &10 & Max. number of CMA-ES generations. \\
  \hline
    \end{tabular}
\end{table}

\begin{table}[ht]
\centering
\caption{Production rules $\mathcal{P}$ }
\label{tab:PRexamp}
    \begin{tabular}{rl}
    Nonterminal & Rules \\ \hline
    $\left< \mathrm{strt} \right> ::=$ & $\rmm{const} +\rmm{expr}  \mid \rmm{const}\cdot\rmm{expr}$ \\
    & $  \mid \rmm{const} +\rmm{const}\cdot\rmm{expr} $\\
$\left< \mathrm{expr} \right> ::=$ & $\rmm{lin} \mid \rmm{pol} \mid 0.5 \mathrm{sgn}( \rmm{lin}) + 0.5+\rmm{expr}$ \\ 
& $ \mid 0.5 \mathrm{sgn}( \rmm{pol}) + 0.5+\rmm{expr}$\\
$\left< \mathrm{lin} \right> ::=$ & $\rmm{const}\cdot x_1 + \dots \rmm{const}\cdot x_n$ \\
$\left< \mathrm{pol} \right> ::=$ & $\rmm{pol} + \rmm{pol} ~|~ \rmm{const}\times\rmm{mon}  $\\
$\left< \mathrm{mon} \right> ::=$ &$ \rmm{var} ~|~ \rmm{var} \times \rmm{mon} $ \\
 $\left< \mathrm{var} \right> ::=$ & $x_1 ~|~ \dots \mid x_n$ \\
  $\left< \mathrm{const} \right> ::=$ & Random Real $\in \left[-1,1\right]$ \\
  \hline
    \end{tabular}
\end{table}

\section{Empirical evaluation} \label{sec:empirical}

	\subsection{Case studies} \label{sub:sec:empirical:case_studies}
	All of the considered case-studies, but the last one, are taken from the standard distribution of SCOTSv2.0: \texttt{Aircraft} - a DC9-30 aircraft landing maneuver, see~\cite{Reissig_EtAl_2016}; \texttt{Vehicle} - a path planning problem for an autonomous vehicle, see~\cite{Zamani_EtAl_2012}~and~\cite{Reissig_EtAl_2016}; \texttt{DCDC} - a boost DC-DC converter with a reach-and-stay 
voltage specification, see~\cite{Girard_2012.1}; \texttt{DCDC rec 1/2} - the same as \texttt{DCDC} but enforces a recurrence specification for two targets; \texttt{DCM} - a DC motor with a reach-and-stay velocity specification, see~\cite{Mazo_EtAl_2010}.The symbolic BDD controller sizes were varied by modifying the models' input-/state-space discretisation parameters.

	\subsection{Software details} \label{sub:sec:empirical:soft}
	For the evaluation, we have realised the following software:
	\begin{itemize}
		\item A C++11 based LibLink library\footnote{We preferred LibLink over WSTP due to faster data-exchange.} for Mathematica~$11$, see~\cite{Mathematica_2017}, allowing to load and store BDD-based symbolic controllers of SCOTSv2.0.
		\item A C++11 based application implementing \texttt{LA}, \texttt{GA}, \texttt{LGA}, and \texttt{BLGA}. Our code is single-threaded as constrained by CUDD.
		\item A Mathematica~$11$ package implementing the \texttt{SR} approach. This realisation is natively multi-threaded and allows for a best utilisation of the CPU cores.
	\end{itemize}

	\subsection{Experimental setup} \label{sub:sec:empirical:set_up}
		We have measured: \emph{(i)}  determinization run-time in seconds as reported by the tools; \emph{(ii)} size of the determinized controllers in bytes, when stored to the file system. \texttt{SR} is probabilistic and therefore each of its experiments was repeated $5$ times. All other approaches are deterministic and thus their experiments were repeated only once.  Overall, we have considered the algorithms on various size models, varying the discretization parameters, and thus changing:
		\begin{enumerate}
			\item The number of model inputs:
				\begin{enumerate}
					\item \texttt{GA}, \texttt{LA}, \texttt{LGA}, \texttt{BLGA}
					\item \texttt{SR} on \texttt{LGA} determinized controllers
				\end{enumerate}
			\item The number of model states:
				\begin{enumerate}
					\item \texttt{GA}, \texttt{LA}, \texttt{LGA}, \texttt{BLGA}
				\end{enumerate}
		\end{enumerate}
		\texttt{SR} was only done on \texttt{LGA} determinized controllers because it: \emph{(i)} did not scale well with the growing number of inputs; \emph{(ii)} if feasible, shall be capable of reducing deterministic controllers.
		The experiments were done on two machines: \emph{(A)} MacBook Pro with: Intel i5 CPU (4 cores) $2.9$ GHz; $8$ GB $2133$ Mhz RAM; MacOS v$10.12.6$; \emph{(B)} PC with: Intel Xeon CPU ($8$ cores) E$5-1660$ v$3$ $3.00$GHz; $32$ GB $2133$ MHz RAM; Ubuntu 16.04.3 LTS. The type $(1.a)$ experiments ran on machine \emph{(A)}; $(1.b)$ and $(2)$ on $(B)$.
		
		Given, a significant difference in software realization (Mathematica v.s. C++11, multi v.s. single threaded), running \texttt{SR} on faster multicore machine, and that controllers' determinization is an offline job, our run-time data: \emph{(i)} is only dedicated to show the approaches' feasibility; \emph{(ii)} can only hint the actual performance differences between \texttt{SR} and others. This is why also the run-time for  \texttt{LA}, \texttt{GA}, \texttt{LGA}, and \texttt{BLGA} is not averaged over multiple re-runs.

	\subsection{Results} \label{sub:sec:empirical:multi_input}
	Table~\ref{tbl:multi_input:main} presents the core experimental data for models obtained by varying the number of inputs. Here, column: ``$\mbox{SCOTS}$'' lists information for the original controllers; ``$\mbox{Time}$'' is the algorithm's run-time in seconds; ``$\texttt{A-SR}$'' and ``$\texttt{M-SR}$''stand for the average and maximum \texttt{SR} values over $5$ repetitions; and ``$\mbox{Fit \%}$'' is the fitness percentage of the \texttt{SR} controller's symbolic part.
	To compare the compressing power of the approaches, for an algorithm $\omega$ and a case study $\nu$ we define size compression as: $C^{\nu}_{\omega}$$:=$$\left(1 - \vert B^{\nu}_{\omega} \vert/\vert B^{\nu}\vert\right)*100$, where $B^{\nu}$ and $B^{\nu}_{\omega}$ stand for the original and $\omega$-determinized BDD sizes. Comparing algorithms ``$\omega_{1}$ v.s. $\omega_{2}$'' is done by computing a difference $\Delta^{\nu}_{\omega_{1},\omega_{2}}$$:=$$C^{\nu}_{\omega_{1}}-C^{\nu}_{\omega_{2}}$. Clearly,  $\Delta^{\nu}_{\omega_{1},\omega_{2}} > 0 $ means $\omega_{1}$ being better than $\omega_{2}$ on $\nu$. Taking into account the \texttt{A-SR} experiment repetitions, we define\footnote{The mean value over $5$ experiment repetitions of \texttt{SR} on $\nu$.}: $C^{\nu}_{\mbox{\tiny\texttt{A-SR}}} := E\left[C^{\nu}_{\mbox{\tiny\texttt{SR}}}\right]$.
	
	Figure~\ref{fig:multi_input:size} contains two compression comparison sets: \emph{(i)} \texttt{GA}, \texttt{LGA}, \texttt{BLGA} v.s. \texttt{LA}; and \emph{(ii)} \texttt{A-SR}, \texttt{M-SR} v.s. \texttt{LGA}\footnote{Since \texttt{SR} was applied to the \texttt{LGA}-determinized controllers.}. The plot features mean compressions and the standard deviation thereof. We conclude the next compression ranking of the algorithms: \emph{1.} \texttt{LGA}, \emph{2.} \texttt{BLGA}, \emph{3.} \texttt{LA}, \emph{4.} \texttt{GA}, \emph{5.} \texttt{M-SR} \emph{6.} \texttt{A-SR}. 
	
	Figure~\ref{fig:multi_input:time} summarises the execution times for the set-up of Table~\ref{tbl:multi_input:main}. Relative to \texttt{LA}, on average: \texttt{GA} is $\approx0.8$ times faster;  \texttt{LGA} is comparable;  \texttt{BLGA} is $\approx1.1$ times slower; \texttt{A-SR} is $\approx180$ times slower but has a huge deviation of $\approx174$. The latter is due to probabilistic nature of \texttt{SR}. Note that, \texttt{A-SR} is multi-threaded and was run on a faster machine than the single-threaded \texttt{LA}. So the actual performance difference between the algorithms is more significant.
	
	Additionally, we compared \texttt{GA}, \texttt{LGA}, \texttt{BLGA} and \texttt{LA} on up to $52$~Mb size BDD controllers, obtained by varying the number of system states. These experiments only strengthened the algorithms' ranking conclusions implied by Figure~\ref{fig:multi_input:size}. We omit further detail on that, to save space.
	
	To conclude, we present Figure~\ref{fig:all:lga_rel_la} summarising the compression of \texttt{LGA} relative to \texttt{LA} on all of the $67$ considered BDD controllers. Per case-study $\nu$ the compression is computed as: $C^{\nu}_{\mbox{\tiny \texttt{LGA}, \texttt{LA}}}$$:=$$\left(1 - \vert B^{\nu}_{\mbox{\tiny \texttt{LGA}}} \vert/\vert B^{\nu}_{\mbox{\tiny \texttt{LA}}}\vert\right)*100$. The plot on the left of Figure~\ref{fig:all:lga_rel_la} shows the discretized distribution of $C^{\nu}_{\mbox{\tiny \texttt{LGA}, \texttt{LA}}}$, the plot on the right shows its mean and standard deviation. Notice that, on average, \texttt{LGA} produces $\approx 14$\% smaller controllers than \texttt{LA}, in the best case  \texttt{LGA} was capable of delivering up to $\approx 85$\% smaller controllers.
	
	\begin{table}
	\caption{Core experimental data}\label{tbl:multi_input:main}
	\vspace{-0.5cm}
	\begin{center}
	{\footnotesize 
	\[
	\begin{array}{|c|c|c|c|c|c|c|c|}
	\hline
	 &  \multicolumn{2}{c|}{\mbox{\tiny SCOTS}} & \multicolumn{2}{c|}{\mbox{\tiny LA}} &\mbox{\tiny LGA} &  \mbox{\tiny A-SR}  & \mbox{\tiny M-SR} \\ \hline
	 & \mbox{\tiny \#inputs} & \mbox{\tiny \#Bytes} & \mbox{\tiny \#Bytes} & \mbox{\tiny Time} & \mbox{\tiny \#Bytes} & \mbox{\tiny Time} & \mbox{\tiny Fit \%} \\ \hline
	 \rule{0pt}{7pt}\multirow{3}{*}{\rotatebox[origin=c]{90}{\mbox{Aircraft}}} & 20 & 2878481 & 150459 & 121.81 & 150316 & 1065,50 & 48.56 \\ \cline{2-8}
	 \rule{0pt}{7pt}& 57 & 9563407 & 193590 & 159.61 & 193055 & 1547,92 & 43.55 \\ \cline{2-8}
	 \rule{0pt}{7pt}& 112 & 8533274 & 236753 & 183.62 & 235273 & 1949,12 & 40.58 \\ \hline
	\multirow{5}{*}{\rotatebox[origin=c]{90}{\mbox{Vehicle}}} & 49 & 21972 & 10462 & 1.38 & 9821 & 572,77 & 32.31 \\ \cline{2-8}
	 & 169 & 28537 & 11956 & 1.79 & 11047 & 614,85 & 27.15 \\ \cline{2-8}
	 & 441 & 54692 & 19430 & 2.86 & 17357 & 770,81 & 13.96 \\ \cline{2-8}
	 & 729 & 52447 & 18435 & 3.41 & 15793 & 954,53 & 18.14 \\ \cline{2-8}
	 & 1087 & 60757 & 18939 & 4.04 & 16338 & 1455,87 & 24.82 \\ \hline
	\multirow{6}{*}{\rotatebox[origin=c]{90}{\mbox{DCM}}} & 2001 & 4951 & 830 & 2.04 & 371 & 458,61 & 33.14 \\ \cline{2-8}
	 & 10001 & 11957 & 1000 & 13.3 & 420 & 639,11 & 33.14 \\ \cline{2-8}
	 & 20001 & 24206 & 1166 & 34.65 & 410 & 742,10 & 24.43 \\ \cline{2-8}
	 & 30001 & 19161 & 1306 & 63.45 & 441 & 951,10 & 33.14 \\ \cline{2-8}
	 & 40001 & 13772 & 1308 & 94.00 & 449 & 975,50 & 19.82 \\ \cline{2-8}
	 & 50001 & 12921 & 1252 & 143.13 & 448 & 1121,98 & 33.14 \\ \hline
	\multirow{6}{*}{\rotatebox[origin=c]{90}{\mbox{DCDC}}} & 2 & 4532 & 786 & 0.75 & 786 & 431,15 & 94.19 \\ \cline{2-8}
	 & 45 & 5218 & 1025 & 1.57 & 1025 & 440,99 & 94.19 \\ \cline{2-8}
	 & 89 & 5350 & 1030 & 2.38 & 1030 & 351,11 & 94.18 \\ \cline{2-8}
	 & 134 & 5272 & 1036 & 3.31 & 1035 & 450,36 & 94.17 \\ \cline{2-8}
	 & 178 & 5266 & 1036 & 4.11 & 1035 & 368,17 & 94.15 \\ \cline{2-8}
	 & 223 & 5300 & 1037 & 5.09 & 1036 & 426,01 & 93.55 \\ \hline
	\multirow{6}{*}{\rotatebox[origin=c]{90}{\mbox{DCDC r1}}} & 2 & 4247 & 773 & 0.78 & 773 & 448,00 & 97.35 \\ \cline{2-8}
	 & 45 & 6009 & 915 & 1.48 & 915 & 417,70 & 97.35 \\ \cline{2-8}
	 & 89 & 5615 & 921 & 2.15 & 921 & 422,11 & 97.35 \\ \cline{2-8}
	 & 134 & 5768 & 936 & 2.96 & 930 & 359,65 & 97.35 \\ \cline{2-8}
	 & 178 & 5781 & 936 & 3.64 & 930 & 409,91 & 97.35 \\ \cline{2-8}
	 & 223 & 5714 & 936 & 4.48 & 930 & 428,49 & 95.13 \\ \hline
	\multirow{6}{*}{\rotatebox[origin=c]{90}{\mbox{DCDC r2}}} & 2 & 2243 & 791 & 0.73 & 828 & 439,04 & 95.08 \\ \cline{2-8}
	 & 45 & 3685 & 934 & 2.15 & 937 & 428,78 & 94.51 \\ \cline{2-8}
	 & 89 & 3638 & 939 & 3.66 & 943 & 395,83 & 95.12 \\ \cline{2-8}
	 & 134 & 3565 & 949 & 4.98 & 949 & 361,49 & 94.99 \\ \cline{2-8}
	 & 178 & 3531 & 949 & 6.43 & 949 & 456,36 & 94.82 \\ \cline{2-8}
	 & 223 & 3549 & 950 & 8.13 & 950 & 408.70 & 92.91 \\ \hline
	\end{array}
	\]
	}
	\end{center}
	\end{table}
	
	\begin{figure}
		\centering
		 \includegraphics[scale=0.98]{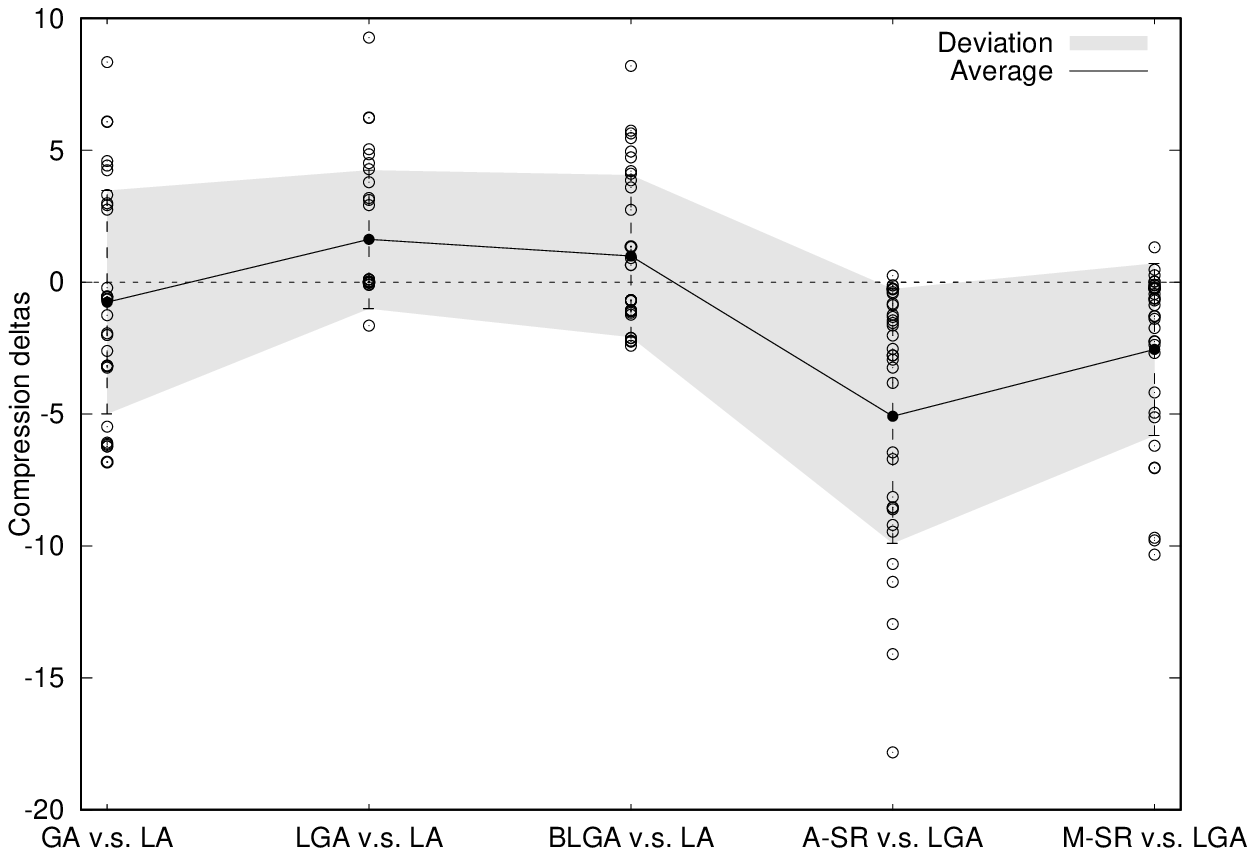}
		\caption{Plotting $\Delta^{\nu}_{\mbox{\tiny \texttt{GA}},\mbox{\tiny \texttt{LA}}}$, $\Delta^{\nu}_{\mbox{\tiny \texttt{LGA}},\mbox{\tiny \texttt{LA}}}$, $\Delta^{\nu}_{\mbox{\tiny \texttt{BLGA}},\mbox{\tiny \texttt{LA}}}$, $\Delta^{\nu}_{\mbox{\tiny \texttt{A-SR}}, \mbox{\tiny \texttt{LGA}}}$, $\Delta^{\nu}_{\mbox{\tiny \texttt{M-SR}},\mbox{\tiny \texttt{LGA}}}$}\label{fig:multi_input:size}
	\end{figure}

	\begin{figure}
		\centering
		\hspace{-0.4cm}
		\begin{tabular}{cc}
			\includegraphics[scale=0.8]{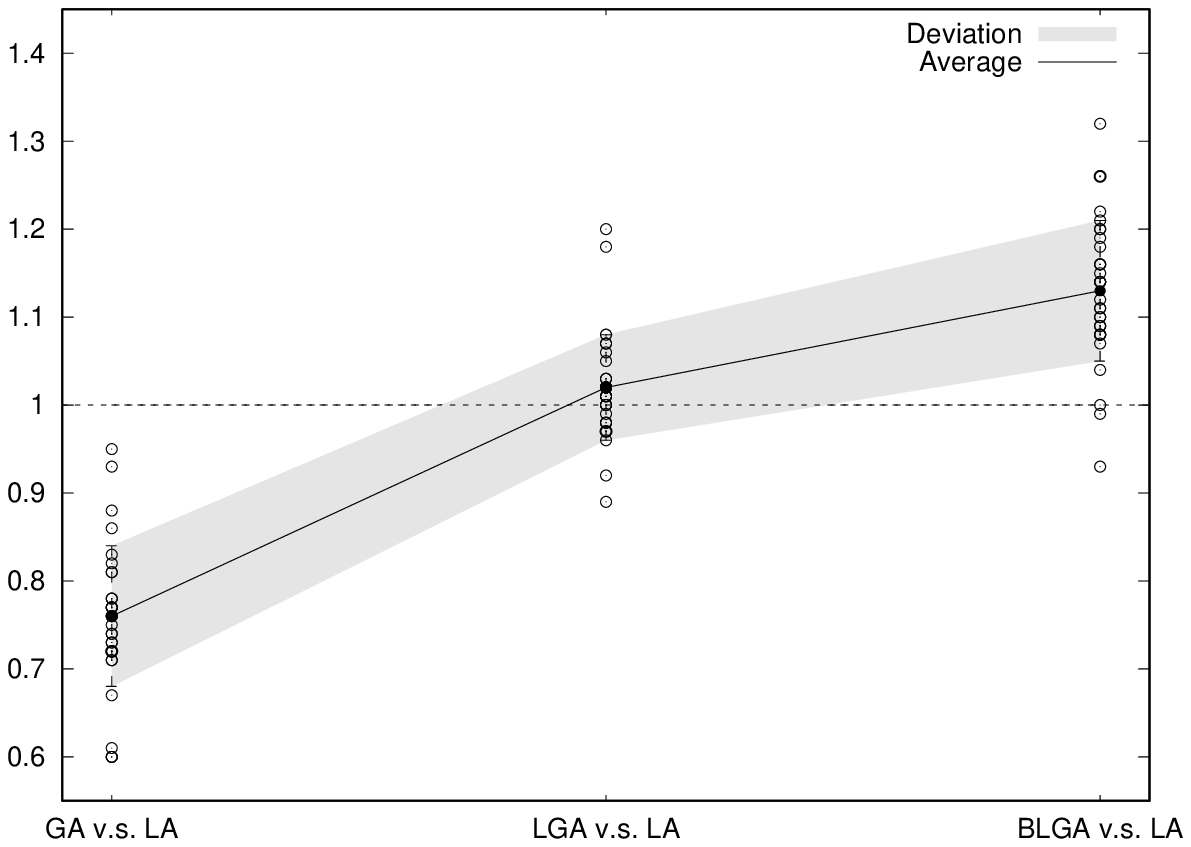} & \hspace{-0.7cm}
			\includegraphics[scale=0.8]{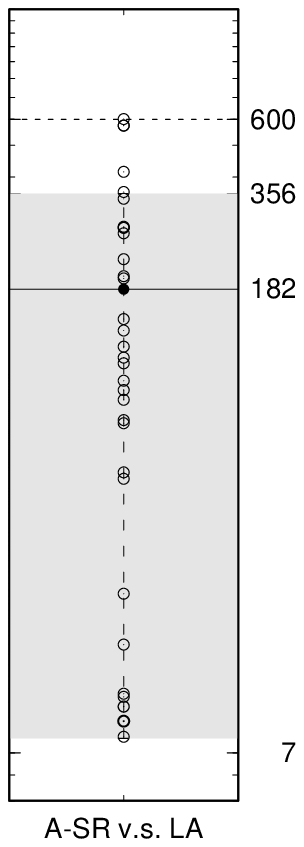}
		\end{tabular}
		\caption{Execution times ratio relative to \texttt{LA}}\label{fig:multi_input:time}
	\end{figure}
	
	\begin{figure}
		\centering
		\hspace{-0.8cm}
		\begin{tabular}{cc}
			\includegraphics[scale=0.75]{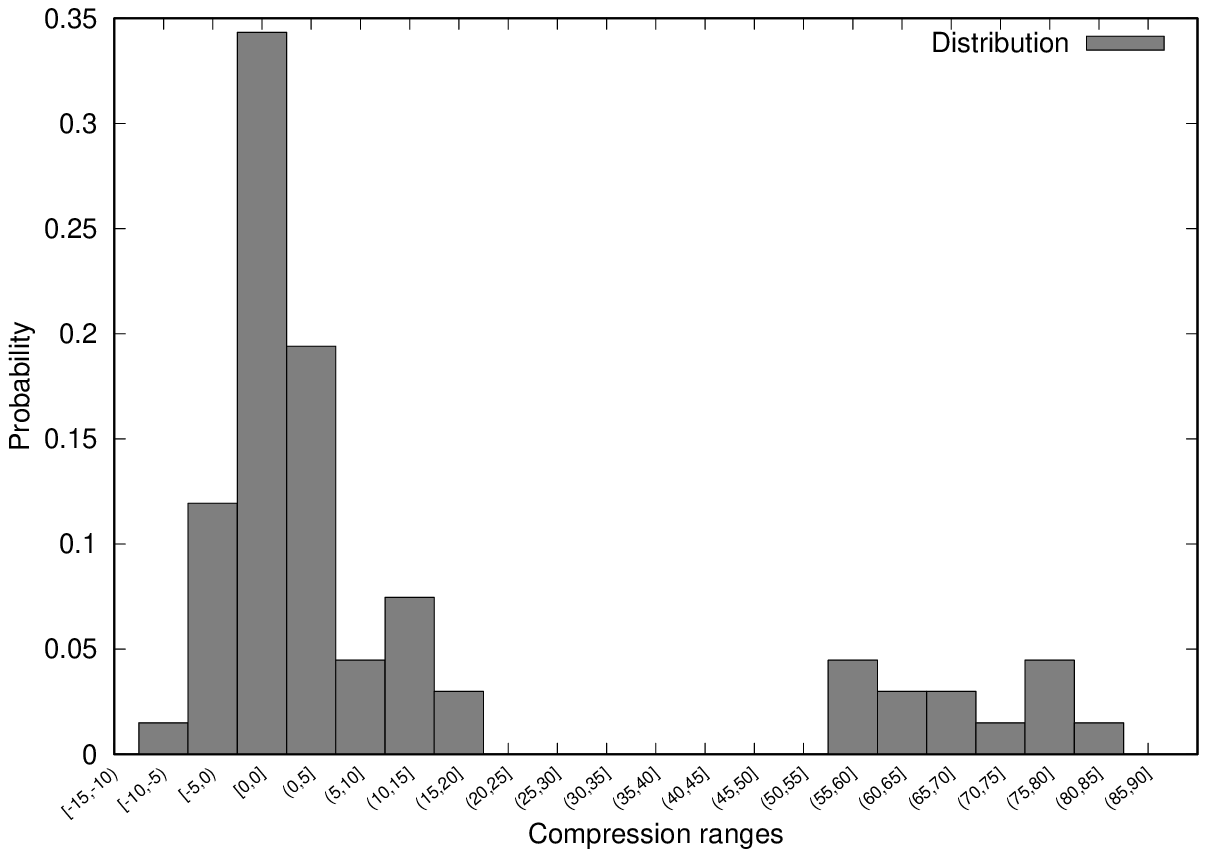} & \hspace{-0.7cm}
			\includegraphics[scale=0.75]{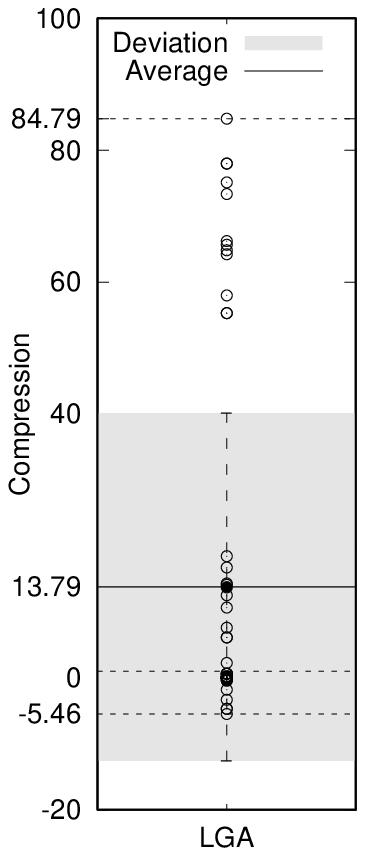}
		\end{tabular}
		\caption{$C^{\nu}_{\mbox{\tiny \texttt{LGA}, \texttt{LA}}}$ distribution, deviation and average}\label{fig:all:lga_rel_la}
	\end{figure}

\section{Conclusions} \label{sec:empirical:summ}

	In this work, we have considered the problem of size-optimal BDD controllers determinisation (\texttt{OD}), which we show to be NP-complete.
	Up until now, the only heuristic approach to solve \texttt{OD} was proposed by \cite{Girard_2012} and was based on representing the controller function as a binary tree. We have shown how that such an approach, which we call \texttt{LA}, can be extended to use the more size-efficient RO-(MT)BDDs data structure.
	In addition, we have identified examples where \texttt{LA} is sub-optimal due to only considering controller's local properties.
	A global approach (\texttt{GA}), based on the minimum set cover problem solution algorithm, was proposed to remedy this.
	Further, a hybrid of \texttt{GA} and \texttt{LA}, called \texttt{LGA}, was suggested to incorporate the strengths of both approaches.
	To exploit the clustering of internal BDD indexes, we have come up with a BDD-index based version of \texttt{LGA}, called \texttt{BLGA}.
	Finally, we made an attempt of substituting the BDD-based control-law representations by functions generated using the symbolic regression (genetic-algorithm powered) approach, we refer to as \texttt{SR}.
	
	All of the devised approaches were compared in compressing power and run-time by means of an extended empirical evaluation. 
	The compression ranking of the algorithms turns out to be: \emph{1.} \texttt{LGA}, \emph{2.} \texttt{BLGA}, \emph{3.} \texttt{LA}, \emph{4.} \texttt{GA}, \emph{5.} \texttt{SR}.
	The run times of \texttt{LA}, \texttt{GA}, \texttt{LGA}, and \texttt{BLGA} are of the same order but \texttt{SR} is at least one to two orders of magnitude slower.
	
	In principle, \texttt{SR} could allow us to eliminate BDDs completely, leading to potentially smaller functional expressions and prevent using BDD-data accessing code that, as for CUDD, is difficult (and size expensive) to port to embedded hardware.
	We did not manage to achieve that due to:
	\emph{(i)} our \texttt{SR} realization not being powerful enough, see low fitness values in Table~\ref{tbl:multi_input:main};
	\emph{(ii)} using BDDs for storing the controller's support, due to a decision to preserve controller's domain.
	For now, we shall note that \texttt{SR} still looks promising for getting small and practical controllers. However, symbolic controllers seem to have structure that is not easy for \texttt{SR} to achieve a $100$\% fitness on. So more research is needed to be done in this direction.

\bibliographystyle{plain}
\bibliography{cclr-arxiv}

\end{document}